\documentclass[10pt,aps,letterpaper,pra,twocolumn]{revtex4-1}


\usepackage{graphicx}
\usepackage[all]{xy}
\usepackage[linkcolor=red]{hyperref}
\usepackage{amsmath}
\usepackage{amsthm}
\usepackage{amssymb}

\allowdisplaybreaks


\newtheorem{thm}{Theorem}
\newtheorem{lem}[thm]{Lemma}


\newcommand{\ket}[1]{{|}{#1}{\rangle}}

\newcommand{\pset}[2]{\mathcal{P}^{#1}_{#2}}


\begin{document}

\title{Quantum Oracles in Constant Depth with Measurement-Based\texorpdfstring{\\}{}
  Quantum Computation}
\date{\today}
\author{Beno\^{\i}t Valiron}
\affiliation{PPS, UMR 7126, Universit\'e Paris Diderot,
  Sorbonne Paris Cit\'{e}, F-75205 Paris, France}

\begin{abstract}
  This paper shows that, in measurement-based quantum computation, it
  is possible to write any quantum oracle implementing a classical
  function in constant depth. The result is shown through the
  equivalence between MBQC and the circuit model where arbitrary
  rotations along $Z$ axis and unbounded fan-outs are elementary
  operations. A corollary of this result is that disjunction can be
  implemented exactly in constant-depth, answering an open question of
  H{\o}yer and \v{S}palek.
\end{abstract}


\maketitle

Proposed by Raussendorf and Briegel~\cite{RB01}, the measure\-ment-based
quantum computational model is radically different from the circuit
model. In the latter, the computation is performed on a set of quantum
bit registers by successive applications of quantum gates. On the
contrary, in the former the computation proceeds by adaptative
one-qubit measurements performed on a {\em cluster state}, that is, a particular
entangled multi-qubit state. The computation is encoded in the graph
of entanglement, in the choice of basis for the measurements, and in
their dependency graph.

In the measurement-based quantum computational model, the {\em depth} of the
computation is the longest path in the dependency graph. Browne,
Kashefi and Perdrix~\cite{browne10} show that this model is
computationally equivalent to the circuit model where arbitrary
rotations $R(\theta)$ around the $Z$ axis
\[
R(\theta) = 
\begin{pmatrix}
  1&0\\
  0&e^{\frac{i\theta}2}
\end{pmatrix},
\]
unbounded fan-outs and parity gates are
taken as elementary gates. In particular, the depth-complexity of an
algorithm is the same in both models, provided that {\em classical}
unbounded parity gates are free.

Because of decoherence, the depth of an algorithm is a crucial
limitation for quantum computation: in general, we want quantum
algorithms to be as parallel as possible. Measurement-based quantum
computation is a natural parallel computational paradigm and various
works investigate its capabilities in terms of depth of
computations~\cite{broadbent09, hoyer05, browne10}. In particular, if
one considers {\em approximations} and not exact descriptions, several
algorithms can be implemented in constant depth~\cite{hoyer05}.

\bigskip
This paper presents a novel result with respect to exact descriptions:
the fact that quantum oracles of the form
$\ket{x}\ket{y}\mapsto\ket{x}\ket{y\oplus f(x)}$ can be {\em exactly}
encoded in constant depth in measurement-based quantum computation. 

As shown in Section~\ref{sec:naive}, it is clear that quantum circuits
can easily do it with a suitable choice of elementary gates, provided
that we allow circuits to have a width exponential on the size of the
input. However, the fact measurement-based quantum computation can
also do it has not been shown so far.

In order to prove this result, we use the equivalent representation in
term of quantum circuits presented by Browne and al.~\cite{browne10},
and we generalize the decomposition of the Toffoli gate given by
Selinger~\cite{selinger}. As a side effect, we also answer an open
question of H{\o}yer and \v{S}palek~\cite{hoyer05}: there {\em is} a
constant-depth exact circuit for the disjunction boolean operator.

\section{Naive parallel implementation of quantum oracles}
\label{sec:naive}

Consider a boolean function $f$ on $n$ inputs. This boolean function
can always be written as
\begin{equation}
\label{eq:f}
(x_1,\ldots,x_n) ~~\longmapsto~~ \bigoplus_{i=1}^N\bigwedge_{k\in K_i}\!\!x_k
\end{equation}
where $N$ is some natural number and the $K_i$'s some subsets of
indices.

\begin{figure}
\begin{center}
  \includegraphics[width=3in]{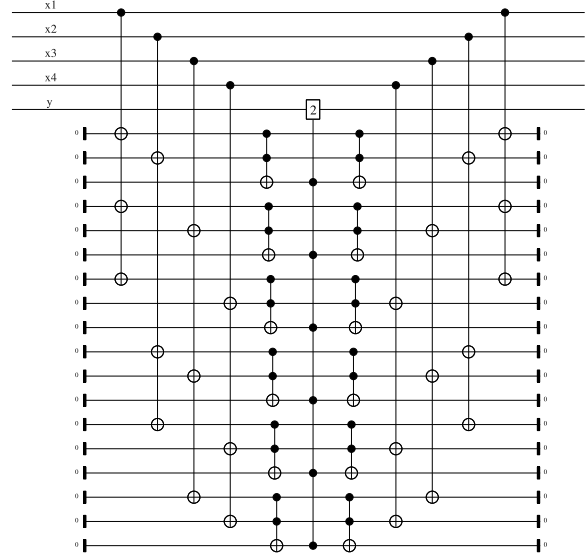}
\end{center}
\caption{Depth-5 oracle with multi-controlled CNOTs, fanouts and
  parity gates.}
\label{fig:1}
\end{figure}

Provided that multi-controlled NOT-gates, unbounded fanouts and
unbounded parity gates are available as elementary gates, this
function can trivially be implemented as a quantum oracle of the form
\[
\ket{x_1,\ldots,x_n}\ket{y}  ~~\longmapsto~~
\ket{x_1,\ldots,x_n}\ket{y\oplus{}f(x_1,\ldots x_n)},
\]
in constant depth, as follows:
\begin{enumerate}
\item Allocate one block of ancillas for each $K_i$. The
  $i$-th block is of the size of $K_i$, plus one.
\item For each $i$, copy $y$ and $\{x_k\,|\,k\in K_i\}$ to the
  corresponding block. This can be done in one step, with
  fanouts in parallel.
\item perform the conjunctions on each block, using multi-controlled
  NOT-gates. Again, this is one step.
\item Do the final XOR on the $y$ gate using a parity gate (one step).
\item Undo the ancillas: multi-controlled
  NOT-gates, then fanouts (two steps).
\item Desallocate the ancillas
\end{enumerate}
In total, not counting allocation and desallocation, the depth of the
circuit is $5$. As an example, the function 
\begin{multline}
\label{eq:f4}
f(x_1,x_2,x_3,x_4) =
(x_1\wedge x_2) \oplus (x_1\wedge x_3)\\ \oplus
(x_1\wedge x_4) \oplus (x_2\wedge x_3) \oplus (x_2\wedge x_4) \oplus
(x_3\wedge x_4)
\end{multline}
can be written as an oracle in depth $5$ as in Figure~\ref{fig:1}
where $\fbox2$ stands for the parity gate. Note how the fanouts are
indeed parallel. Now, any function $f$ over an arbitrary number
of input variables could be implemented with an oracle of the same
shape, of depth 5. It is also easy to see how to extend this technique
to the case of a boolean function $f$ with more than one output.

\bigskip The remainder of this paper is concerned with the
implementation of this decomposition in MBQC, or
equivalently~\cite{browne10} in a model of quantum circuit where
unbounded fan-outs, Hadamard and rotations around the $Z$-axis are
elementary gates.

\section{A useful equality}

The main problem is the use of multi-controlled NOTs. In order to
proceed with their decompositions using $R(\theta)$-gates, we
generalize the formula of Selinger relating
conjunction of 3 boolean variables with XOR \cite[Eq~(5)]{selinger} to
Equation~\eqref{eq:andxor}, relating the conjunction of $n$ boolean
variables with XOR.

As it is customary, we assimilate the boolean {\em false} with $0$ and
the boolean {\em true} with $1$. The conjunction is simply the
product, and we can transparently write boolean equations as equations
over integers. With these conventions, one can show how to compute
the conjunction of $n$ booleans using XORs. This amounts to the
Fourier spectra of the conjunction.

\begin{lem}
  \label{lem:3}
  For all $n>0$ and for any family $\{x_i\}_{i=1,\ldots,n}$ of
  booleans, and if $\pset{n}{i}$ is the set of all subsets of
  $\{1\ldots n\}$ of size equal to $i$,
  \begin{equation}
    \label{eq:andxor}
    2^{n-1}\bigwedge_{i=1}^nx_i
    =
    \sum_{i=1}^n (-1)^{i-1} \sum_{K\in\pset{n}{i}}\bigoplus_{k\in K}x_k
  \end{equation}
\end{lem}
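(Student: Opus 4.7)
The plan is to reduce the identity to a straightforward binomial computation by switching to the multiplicative ($\pm 1$) encoding of booleans. Setting $\hat{x}_i = 1 - 2x_i \in \{+1,-1\}$, the two relevant identities are $\bigwedge_i x_i = 2^{-n}\prod_i(1-\hat{x}_i)$ and $\bigoplus_{k\in K} x_k = \tfrac12\bigl(1 - \prod_{k\in K}\hat{x}_k\bigr)$; both are immediate once one notices that the product of the $\pm 1$-valued signs tracks the parity of the corresponding booleans.

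With this substitution, I would expand the left-hand side as
\[
2^{n-1}\bigwedge_i x_i \;=\; \tfrac12 \sum_{K\subseteq\{1,\ldots,n\}} (-1)^{|K|} \prod_{k\in K} \hat{x}_k
\]
and perform the analogous expansion on the right-hand side, where each XOR contributes a constant $\tfrac12$ plus a signed monomial $-\tfrac12\prod_{k\in K}\hat{x}_k$. Matching the coefficient of $\prod_{k\in K}\hat{x}_k$ for $|K|=i\geq 1$ is then a sign check (both sides give $\tfrac12(-1)^i$), and the only nontrivial verification is the constant term: it is $\tfrac12$ on the left and $\tfrac12\sum_{i=1}^n (-1)^{i-1}\binom{n}{i}$ on the right. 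These agree because $\sum_{i=0}^n (-1)^i\binom{n}{i} = (1-1)^n = 0$ forces $\sum_{i=1}^n(-1)^{i-1}\binom{n}{i} = 1$.

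The main obstacle is purely bookkeeping: lining up the outer sign $(-1)^{i-1}$ against the sign $(-1)^{|K|}$ produced by expanding $\prod(1-\hat{x}_k)$, and isolating the constant term cleanly from the monomial contributions. An alternative and more pedestrian route is a case analysis on $m = \#\{i : x_i = 1\}$: after reordering the sum on the right-hand side, counting subsets of size $i$ whose intersection with the set of true indices has odd size produces an inner sum $\sum_{l=0}^{n-m}(-1)^l\binom{n-m}{l}$, which equals $0$ for $m<n$ and $1$ for $m=n$; in the latter case the outer sum collapses to $\sum_{j\text{ odd}}\binom{n}{j}=2^{n-1}$, matching the left-hand side. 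I would avoid induction on $n$, because XOR interacts awkwardly with integer arithmetic when one tries to isolate $x_{n+1}$ and split the sum accordingly.
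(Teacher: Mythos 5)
Your proof is correct, but it takes a genuinely different route from the paper. The paper proceeds by induction on $n$: it verifies the two-variable identity $2x_1x_2 = x_1 + x_2 - x_1\oplus x_2$ by inspection, multiplies the inductive hypothesis by $2x_{n+1}$, re-expands each term with that identity, and then regroups the resulting sum with the help of the appendix lemma $\sum_{i=1}^n(-1)^{i-1}\binom{n}{i}=1$ --- exactly the bookkeeping with $x_{n+1}$ that you chose to avoid. Your argument instead passes to the $\pm1$ encoding $\hat x_i = 1-2x_i$, writes both sides as multilinear polynomials in the $\hat x_i$, and matches coefficients monomial by monomial; I checked the two expansions and the signs do line up (coefficient $\tfrac12(-1)^i$ on each side for $|K|=i\geq 1$, constant term $\tfrac12$ on both sides via $(1-1)^n=0$). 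This is a legitimate one-shot verification: equality of coefficients of two explicit multilinear expressions trivially implies equality of the functions. What your approach buys is transparency --- it makes literal the paper's own remark that the identity ``amounts to the Fourier spectra of the conjunction,'' and it dispenses with the induction and with Lemma~\ref{lem:1}. What it costs is the change of variables and the bookkeeping of isolating the constant term; note also that you still need the same alternating binomial sum as the paper's Lemma~\ref{lem:2}, though you derive it directly from the binomial theorem rather than by a separate induction. Your alternative sketch (case analysis on the number $m$ of true inputs, reducing to $\sum_{l=0}^{n-m}(-1)^l\binom{n-m}{l}$) is also sound and closer in spirit to a direct combinatorial count; either version would serve as a complete proof once the coefficient matching or the double-counting step is written out.
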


\begin{proof}
  The proof is done by induction on $n$.
  
  For $n=1$, the equality is trivial.
  
  For $n=2$, the equality is
  \begin{equation}
    \label{eq:1}
    2x_1x_2 = x_1 + x_2 - x_1\oplus{}x_2
  \end{equation}
  which can be shown correct by inspection of the 4 possible values
  for the pair $(x_1,x_2)$.
  
  Now suppose that the equation is correct for $n\geq2$, and consider
  the case $n+1$:
  \[
  2^n\bigwedge_{i=1}^{n+1}x_i = 
  2x_{n+1}\cdot2^{n-1}\bigwedge_{i=1}^{n}x_i
  \]
  which is, by induction hypothesis, equal to
  \[
  2x_{n+1}\cdot\left(
    \sum_{i=1}^n (-1)^{i-1} \sum_{K\in\pset{n}{i}}\bigoplus_{k\in K}x_k
  \right).
  \]
  Expanding, this is equal to
  \[
  \sum_{i=1}^n (-1)^{i-1} \sum_{K\in\pset{n}{i}}2x_{n+1}\bigoplus_{k\in K}x_k.
  \]
  Using Eq.~\eqref{eq:1}, we get
  \[
  \sum_{i=1}^n (-1)^{i-1} \sum_{K\in\pset{n}{i}}\left(x_{n+1} + \bigoplus_{k\in K}x_k -
    x_{n+1}\oplus\bigoplus_{k\in K}x_k\right).
  \]
  One can then conclude using Lemma~\ref{lem:2} (found in the appendix).
\end{proof}

\section{Multi-controlled NOT gates}
\label{sec:ccnot}

Extending the technique presented in \cite{selinger}, together with
auxiliary ancillas one can decompose any multi-controlled $Z$-gate
as a circuit consisting of Clifford and $R(\theta)$ gates.

\begin{figure*}[tb]
  \[
  \begin{array}{c}\xymatrix@=2ex{
      \ar@{-}[r]&*={\bullet}\ar@{-}[r]\ar@{-}[d]&
      \\
      \ar@{-}[r]&*={\bullet}\ar@{-}[r]\ar@{-}[d]&
      \\
      \ar@{-}[r]&*+[F]{Z}\ar@{-}[r]&
    }
  \end{array}
  =
  \begin{array}{c}
    \includegraphics{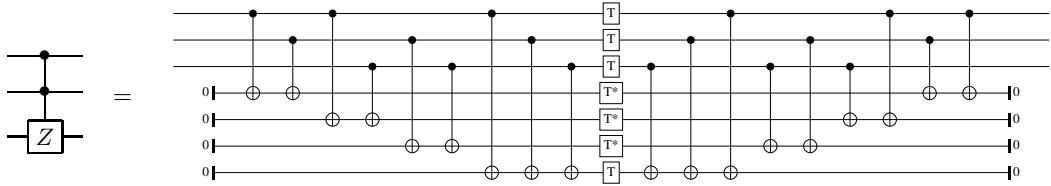}
  \end{array}
  \]
  \caption{Decomposition of the 2-controlled $Z$-gate.}
  \label{fig:toffoli}
\end{figure*}

\begin{figure*}[tb]
  \[
  \begin{array}{c}\xymatrix@=2ex{
      \ar@{-}[r]&*={\bullet}\ar@{-}[r]\ar@{-}[d]&
      \\
      \ar@{-}[r]&*={\bullet}\ar@{-}[r]\ar@{-}[d]&
      \\
      \ar@{-}[r]&*={\bullet}\ar@{-}[r]\ar@{-}[d]&
      \\
      \ar@{-}[r]&*+[F]{Z}\ar@{-}[r]&
    }
  \end{array}
  =
  \begin{array}{c}
    \includegraphics[width=6.1in]{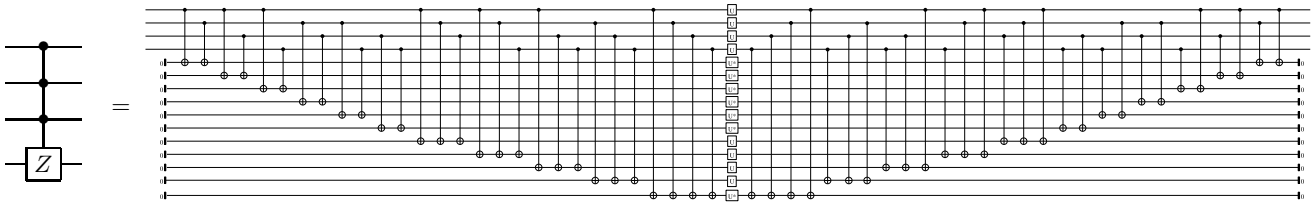}
  \end{array}
  \]
  \caption{Decomposition of the 3-controlled $Z$-gate.}
  \label{fig:toffoli3}
\end{figure*}

\begin{lem}
  \label{lem:4}
  Any Z-gate controlled by $n\geq 2$ qubits can
  be written as a circuit consisting of (1) a sequence of CNOTs, (2) a
  list of $2^{n+1}-1$ gates $R(\theta)$ in parallel, (3) a sequence of CNOTs.
\end{lem}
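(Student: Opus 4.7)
The plan is to view Lemma~\ref{lem:3} as a \emph{phase-polynomial} identity and to use it to diagonalise the $n$-controlled $Z$-gate. On the computational basis, the gate sends $\ket{x_1,\ldots,x_n,y}$ to $(-1)^{y\wedge x_1\wedge\cdots\wedge x_n}\ket{x_1,\ldots,x_n,y}$, so it suffices to realise this global phase as a product of contributions of the form $e^{i\theta_K(\bigoplus_{k\in K}q_k)/2}$, where $q_1,\ldots,q_{n+1}$ denote the $n$ controls together with the target. Each such contribution can be produced by copying the parity $\bigoplus_{k\in K}q_k$ onto a fresh ancilla with CNOTs, applying $R(\theta_K)$ to that ancilla, and uncomputing---exactly the sandwich structure announced by the statement.

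Applying Lemma~\ref{lem:3} to the $n+1$ booleans $q_1=x_1,\ldots,q_n=x_n,q_{n+1}=y$ and dividing by $2^n$ gives
\begin{equation*}
y\!\bigwedge_{i=1}^{n}\!x_i \;=\; \frac{1}{2^n}\sum_{i=1}^{n+1}(-1)^{i-1}\!\!\!\sum_{K\in\pset{n+1}{i}}\bigoplus_{k\in K}q_k.
\end{equation*}
The right-hand side has exactly one term per nonempty subset $K\subseteq\{1,\ldots,n+1\}$, hence $2^{n+1}-1$ terms in total; these will index the parallel rotations.

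The circuit is then assembled as follows. Allocate one ancilla $a_K$ in state $\ket 0$ for each nonempty $K$. Block~(1) is a sequence of CNOTs that, for every $K$ and every $k\in K$, applies a CNOT with control $q_k$ and target $a_K$; this puts $a_K$ into $\ket{\bigoplus_{k\in K}q_k}$. Block~(2) is the parallel layer of $2^{n+1}-1$ rotations: apply $R(\theta_K)$ to $a_K$ with $\theta_K=(-1)^{|K|-1}\pi/2^{n-1}$. Since $R(\theta)\ket b=e^{i\theta b/2}\ket b$, the global phase collected is
\begin{equation*}
\exp\!\Bigl(i\,\tfrac{\pi}{2^n}\sum_{i=1}^{n+1}(-1)^{i-1}\!\!\sum_{K\in\pset{n+1}{i}}\bigoplus_{k\in K}q_k\Bigr) \;=\; (-1)^{y\wedge x_1\wedge\cdots\wedge x_n},
\end{equation*}
by the displayed identity. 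Block~(3) is the mirror of Block~(1), uncomputing every ancilla back to $\ket 0$ so that the ancillary register factors out cleanly.

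The main obstacle is essentially bookkeeping: checking that the three blocks really have the advertised shape (only CNOTs, then $2^{n+1}-1$ disjoint rotations in parallel on the ancillas, then only CNOTs), and that the uncomputation is exact when the data qubits are in superposition---which it is, because each data qubit appears only as a CNOT control and is therefore untouched in the computational basis. All the nontrivial algebraic content is packaged inside Lemma~\ref{lem:3}; everything else is a routine linearity argument.
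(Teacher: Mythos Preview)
Your proof is correct and takes essentially the same route as the paper: apply Lemma~\ref{lem:3} to the $n+1$ bits to write the controlled-$Z$ phase as a product of $2^{n+1}-1$ parity-dependent rotations, compute the parities with CNOTs, rotate in parallel, and uncompute. The only minor variation is that you allocate a fresh ancilla for every nonempty subset, whereas the paper applies the singleton rotations directly on the $n+1$ data qubits and thus uses only $2^{n+1}-(n+2)$ ancillas; this does not affect the structure claimed in the lemma.
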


\begin{proof}
  The proof is an adaptation of the one developed by Selinger~\cite{selinger},
  generalized to the $n$-ary case. Let $T^n$ be the gate sending
  \[
  \ket{x_1\ldots{}x_n} \longmapsto
  (-1)^{x_1\wedge\cdots\wedge{}x_n}\ket{x_1\ldots{}x_n}
  \]
  with $n\geq 3$.
  This gate is a Z-gate controlled by $n-1$ quantum bits. 
  Thanks to Lemma~\ref{lem:3}, 
  $(-1)^{x_1\wedge\cdots\wedge{}x_n}$ can be written as
  \[
  \prod_{i=1}^{n}
  \prod_{K\in\pset{n}{i}}\omega_{n}^{(-1)^{i-1}\oplus_{k\in K}x_k}
  \]
  where $\omega_n=e^{\frac{i\pi}{2^n}}$. Therefore, the gate $T^n$ can
  be implemented by applying $R(\frac{i\pi}{2^{n-1}})$-gates and
  $R(\frac{-i\pi}{2^{n-1}})$-gates to qubits
  in state $\ket{\oplus_{k\in K}x_k}$ where $K$ are non-empty subsets
  of $\{1\ldots n\}$. One can construct and store these values using
  CNOT gates and $2^{n}-1-n$ ancillas: this allows the rotations gates to
  be set in parallel. The ancillas can then be reset to their original
  values, since the rotations around the $Z$-axis only change the
  phase.
\end{proof}

As examples, we first show the 2-controlled $Z$-gate \cite{selinger} in
Figure~\ref{fig:toffoli}: the $T$ gate is $R(\frac\pi4)$. We also show
the case of the $Z$-gate controlled by 3 qubits in
Figure~\ref{fig:toffoli3}, where the gate $U$ is $R(\frac\pi8)$.
In both cases, the blocks of CNOTS are indeed made of pairwise
commuting gates.

\bigskip
It is easy to see that in a given decomposition, each of the two
blocks of CNOTs can be made of pairwise commuting gates: each block
can then be encoded in constant-depth using unbounded fan-outs and
parity gates~\cite{green02,moore02}.

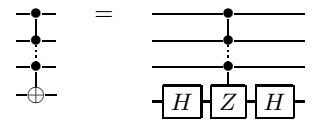
\begin{figure}[bt]
\[
\xymatrix@=1ex{
  \ar@{-}[r]&*{\bullet}\ar@{-}[r]\ar@{-}[d]&\\
  \ar@{-}[r]&*{\bullet}\ar@{-}[r]\ar@{..}[d]&\\
  \ar@{-}[r]&*{\bullet}\ar@{-}[r]\ar@{-}[d]&\\
  \ar@{-}[r]&*{\oplus}\ar@{-}[r]&
}
~~=~~
\xymatrix@=1ex{
  \ar@{-}[rr]&&*{\bullet}\ar@{-}[rr]\ar@{-}[d]&&\\
  \ar@{-}[rr]&&*{\bullet}\ar@{-}[rr]\ar@{..}[d]&&\\
  \ar@{-}[rr]&&*{\bullet}\ar@{-}[rr]\ar@{-}[d]&&\\
  \ar@{-}[r]&*+[F-]{H}\ar@{-}[r]&*+[F-]{Z}\ar@{-}[r]&*+[F-]{H}\ar@{-}[r]&
},
\]
\caption{Controlled NOTs and controlled $Z$-gates.}
\label{fig:NOTZ}
\end{figure}

Therefore, because multi-controlled NOTs are two Hadamard away from
multi-controlled $Z$-gates as shown in Figure~\ref{fig:NOTZ}, 
any multi-controlled NOT gate can be written in constant
depth using arbitrary $Z$-rotations, Hadamard gates, unbounded fanouts
and parity gates.

\section{Quantum oracles in MBQC}

Together with unbounded fanouts, Hadamard gates and arbitrary
rotations along the Z axis and using the technique presented in
Section~\ref{sec:naive}, one can therefore implement any boolean
function in constant depth. Since constant-depth circuits using such
gates can be implemented by constant-depth MBQC
patterns~\cite{browne10}, one concludes that quantum oracles can be
implemented in constant depth in measurement-based quantum
computation.

\section{Complexity of the overall size}

If the depth of the computation is constant, it is worth noting that
in general the overall size is exponential with respect to the size of
the input. Indeed, the width of the corresponing circuit corresponds
to the sum of the numbers of subsets of the $K_i$ in
Eq.~\eqref{eq:f}. For example, consider the function $f$ as the
conjunction, sending the vector $(x_1,\ldots,x_n)$ to
$x_1\wedge\ldots\wedge x_n$. This is computed by a NOT-gate controlled
by $n$ quantum bits, which, from Lemma~\ref{lem:4}, can be represented
by a quantum circuit consisting of $2^{n+1}-1$ $Z$-rotations. The
resulting MBQC pattern is therefore exponential in $n$.

\bigskip
One can however recover a polynomial sized-pattern for
Eq.~\eqref{eq:f} in the case where $N$ is polynomial in $n$ and when
the size of the $K_i$ is at most logarithmic in $n$. For example,
the generalization of Eq.~\eqref{eq:f4}
\[
f(x_1,\ldots x_n) = \bigoplus_{i\neq j} x_i\wedge x_j
\]
has a pattern representation of size polynomial on $n$.

\section{Disjunction in constant depth.}
We conclude this paper with a side comment, answering an open question.
H{\o}yer and \v{S}palek have asked~\cite{hoyer05} whether
the disjunction:
\[
\ket{x_1,\ldots,x_n}\ket{y}\longmapsto
\ket{x_1,\ldots,x_n}\ket{y\oplus(x_1\vee\ldots\vee x_n)}
\]
can be implemented exactly by a constant-depth circuit. Using the
results of the present paper, we can answer positively: using the fact
that the disjunction of $n$ variables $x_1\vee\ldots\vee x_n$ can be
realized with a simple conjunction ${\tt not}({\tt
  not}~x_1\wedge\ldots\wedge {\tt not}~x_n)$, the requested circuit is
essentially the decomposition of the multi-controlled NOT
gate. However, note that the {\em size} of the circuit is exponential
on $n$.

\section{Acknowledgments}

We would like to thank Simon Perdrix for enlightening
discussions. This work was supported by the ANR project
ANR-2010-BLAN-021301 LOGOI.

\appendix

\section{Auxiliary lemmas}

In this appendix we recall two elementary results about binomial coefficients.

Let us write $\pset{n}{i}$ for the set of all subsets of $\{1\ldots
n\}$ of size equal to $i$. If $X$ is a set, let us write $\sharp{}X$
for the size of $X$. Note that $\sharp\pset{n}{i}$ is the binomial coefficient
$\binom{n}{p}$.
\begin{lem}
  \label{lem:1}
  For all $n>0$, for all $0<i\leq n+1$, the following equality holds:
  $\binom{n+1}{i} = \binom{n}{i-1} + \binom{n}{i}$.
\end{lem}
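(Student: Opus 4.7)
The plan is a direct combinatorial proof based on partitioning subsets of $\{1,\ldots,n+1\}$ according to whether they contain the distinguished element $n+1$. First I would recall that, by the definition given just before the lemma, $\binom{n+1}{i} = \sharp\pset{n+1}{i}$, so it suffices to split $\pset{n+1}{i}$ into two disjoint classes and count each.

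Concretely, any $K \in \pset{n+1}{i}$ falls into exactly one of two cases. Either $n+1 \notin K$, in which case $K \subseteq \{1,\ldots,n\}$ with $\sharp K = i$ and hence $K \in \pset{n}{i}$; or $n+1 \in K$, in which case $K \setminus \{n+1\}$ is an element of $\pset{n}{i-1}$, and the map $K \mapsto K \setminus \{n+1\}$ is a bijection between the second class and $\pset{n}{i-1}$. Adding the cardinalities of the two classes yields $\binom{n+1}{i} = \binom{n}{i} + \binom{n}{i-1}$, which is the claim.

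An entirely algebraic alternative is to start from $\binom{n}{k} = \frac{n!}{k!(n-k)!}$, bring $\binom{n}{i-1}$ and $\binom{n}{i}$ to the common denominator $i!(n+1-i)!$, and check that the numerator collapses to $(n+1)!$. Either route is routine and I would pick whichever reads more cleanly next to the companion Lemma~2 in the appendix.

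The only care needed is at the boundary values $i=1$ and $i=n+1$, where one of the right-hand binomials corresponds to the empty or to a nonexistent subset; with the standard conventions $\binom{n}{0}=1$ and $\binom{n}{n+1}=0$ these reduce to the trivial identities $1+n = n+1$ and $1+0 = 1$. There is no genuine obstacle here, as this is simply Pascal's rule; the statement is included only to make the manipulation of the sums $\sum_{K \in \pset{n}{i}}$ in the proof of Lemma~\ref{lem:3} fully rigorous when regrouping terms after the inductive step.
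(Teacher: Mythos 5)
Your main argument---partitioning $\pset{n+1}{i}$ into the subsets containing $n+1$ (in bijection with $\pset{n}{i-1}$) and those not containing it (which form $\pset{n}{i}$)---is exactly the proof given in the paper, and it is correct. The algebraic alternative and the boundary-case remarks are fine but unnecessary.
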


\begin{proof}
  This is an easy corollary of the fact that the set $\pset{n+1}{i}$
  is in fact $\{S\cup\{n+1\}\,|\, S\in\pset{n}{i-1}\} \cup \pset{n}{i}$.
\end{proof}

\begin{lem}
  \label{lem:2}
  For all $n>0$,
  $
  \sum_{i=1}^n(-1)^{i-1}\binom{n}{i} = 1.
  $
\end{lem}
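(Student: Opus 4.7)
The plan is to derive the identity directly from the binomial theorem, which gives the cleanest one-line argument, with an induction-based alternative as a backup that only relies on the preceding Lemma~\ref{lem:1}.

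First I would apply the binomial theorem to $(1+x)^n$ at $x=-1$, obtaining
\[
0 = (1-1)^n = \sum_{i=0}^n \binom{n}{i}(-1)^i
  = 1 + \sum_{i=1}^n \binom{n}{i}(-1)^i.
\]
Rearranging yields $\sum_{i=1}^n (-1)^{i-1}\binom{n}{i} = 1$, which is exactly the claim. This works for every $n>0$ since $(1-1)^n = 0$ in that range.

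If one prefers a self-contained combinatorial argument using only results stated in the paper, I would instead proceed by induction on $n$, using Lemma~\ref{lem:1}. The base case $n=1$ gives $(-1)^0\binom{1}{1}=1$. For the inductive step, split the sum for $n+1$ using $\binom{n+1}{i}=\binom{n}{i-1}+\binom{n}{i}$ (Lemma~\ref{lem:1}) for $1\le i\le n$, and handle the boundary term $\binom{n+1}{n+1}=\binom{n}{n}$ separately. The $\binom{n}{i}$ pieces reassemble (up to the missing $i=n+1$ term, which vanishes since $\binom{n}{n+1}=0$) into the inductive hypothesis, and a reindexing $j=i-1$ of the $\binom{n}{i-1}$ pieces gives the negative of the same sum, so the two cancel and leave $1$.

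I do not expect any real obstacle: the identity is the standard ``alternating row sum of Pascal's triangle equals zero'' rewritten to start at $i=1$. The only mild subtlety, in the inductive version, is keeping the signs straight when reindexing $j=i-1$ (which flips $(-1)^{i-1}$ to $(-1)^{j}=-(-1)^{j-1}$), so that the two halves of the split indeed cancel rather than reinforce.
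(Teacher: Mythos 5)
Your proposal is correct. Your primary argument (evaluating $(1+x)^n$ at $x=-1$ and peeling off the $i=0$ term) is a genuinely different and more economical route than the paper's: the paper does not invoke the binomial theorem at all, but instead applies Lemma~\ref{lem:1} to split each $\binom{n}{i}$ as $\binom{n-1}{i-1}+\binom{n-1}{i}$ and observes that the resulting alternating sum telescopes, leaving only the boundary terms $\binom{n-1}{0}=1$ and $(-1)^{n-1}\binom{n-1}{n}=0$. Your backup argument is close in spirit to the paper's, since both rest on Pascal's rule, but note the paper's version is a direct telescoping rather than an induction --- no inductive hypothesis is ever used in the $n>1$ case, so your reindexing-and-cancel step is doing slightly more bookkeeping than necessary. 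What the binomial-theorem route buys is brevity and independence from Lemma~\ref{lem:1}; what the paper's route buys is self-containedness within its own combinatorial framework (it only uses the set-theoretic decomposition of $\pset{n+1}{i}$ established just above). Either is acceptable; the sign bookkeeping you flag in the reindexing $j=i-1$ is indeed the only place to be careful, and you handle it correctly.
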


\begin{proof}
  If $n=1$, the lemma is true since there is only one element in a
  singleton. If $n>1$, then using the previous lemma we deduce that
  \[
  \sum_{i=1}^{n}(-1)^{i-1}\binom{n}{i}
  =
  \sum_{i=1}^{n}(-1)^{i-1}(\binom{n-1}{i-1} + \binom{n-1}{i}).
  \]
  This is equal to
  \[
  \binom{n-1}{0} + (-1)^{n-1}\binom{n-1}{n}.
  \]
  The first element in the sum is $1$, the second is $0$.
\end{proof}

\bibliography{paper}
\bibliographystyle{plainnat}

\end{document}